\DeclarePairedDelimiter\abs{\lvert}{\rvert}%
\DeclarePairedDelimiter\norm{\lVert}{\rVert}%
\let\oldabs\abs
\def\abs{\@ifstar{\oldabs}{\oldabs*}}
\let\oldnorm\norm
\def\norm{\@ifstar{\oldnorm}{\oldnorm*}}
\title{Minimum Hidden Guarding of Histogram Polygons}
\author{Hamid Hoorfar \thanks{Department of Computer Engineering and Information Technology, Amirkabir University of Technology (Tehran Polytechnic), {\tt \{hoorfar,ar.bagheri\}@aut.ac.ir}}\and Alireza Bagheri\footnotemark[1]~\footnote{\textit{Coresponding author}.}
 }
\begin{document}
\maketitle

\begin{abstract}
A hidden guard set $ G $ is a set of point guards in polygon $ P $ that all points of the polygon are visible from some guards in $ G $ under the constraint that no two guards may see each other. In this paper, we consider the problem for finding minimum hidden guard sets in histogram polygons under orthogonal visibility. Two points $ p $ and $ q $ are orthogonally visible if the orthogonal bounding rectangle for $ p $ and $ q $ lies within $ P $. It is known that the problem is NP-hard for simple polygon with general visibility and it is true for simple orthogonal polygon. We proposed a linear time exact algorithm for finding minimum hidden guard set in histogram polygons under orthogonal visibility. In our algorithm, it is allowed that guards place everywhere in the polygon. 

\end{abstract}

\section{Introduction}

In the standard version of the art gallery problem one is given a simple polygon $ P $ in the plane that needs to be guarded by
a set of point guards~\cite{urrutia2000art}. In other words, we want to find a set of point guards such that every point in $ P $ is seen by at least one of the guards, where a guard $ g $ sees a point $ p $ if the segment $ gp $ is contained in $ P $. Lee and Lin~\cite{lee1986computational}
proved that finding the minimum number of point guards needed to guard an arbitrary polygon is NP-hard. The art gallery problem is also NP-hard for orthogonal polygons~\cite{schuchardt1995two} 
and it even remains NP-hard for monotone polygons~\cite{schuchardt1995two}. In some papers, it is assumed that the visibility is in orthogonal model(r-visibility) instead of standard line visibility. In standard, points $ p $ and $ q $ are visible from each other in the polygon $ P $, if line segment $ pq $ is contained in $ P $. So, in orthogonal visibility, they are r-visible (orthogonally visible) to each other if the axis-parallel rectangle spanned by the points is contained in $ P $~\cite{o2004visibility}. Worman and Keil~\cite{worman2007polygon} studied the orthogonal polygon decomposition problem that is equivalent to the orthogonal art gallery problem and showed that the problem is polynomially solvable for r-visibility. The time complexity of their algorithm is $ O(n^{17}poly \log n) $. Gewali and et. al. ~\cite{gewali1996placing} presented an $ O(n) $ time algorithm for covering a monotone orthogonal polygon with the minimum number of orthogonal star-shaped polygons. Finding minimum covers by star-shaped polygons is equivalent to finding the minimum number of guards needed so that every point in the polygon is visible to at least one guard. Palios and Tzimas~\cite{palios2014minimum} considered the the problem on simple class-3 orthogonal polygons, i.e., orthogonal polygons that have dents along at most $ 3 $ different orientations. They gave an output-sensitive $ O(n+k \log ⁡k) $-time algorithm where $ k $ is the size of a minimum r-star cover. Beidl and Mehrabi~\cite{biedl2016r} showed that problem is NP-hard on orthogonal polygons with holes. A polygon is called \textit{tree polygon} If dual graph of the polygon is a tree. Also, They presented an algorithm for tree polygon in the linear-time. If the guards are only allowed to be on vertices(vertex guard variant), discretization combined with iteration  yields an $ O(n^4) $ solution~\cite{couto2007exact} for general orthogonal polygons. In the polygon $ P $, a \textit{hidden set} is defined as a set of points such that no two points in the set are visible to each other. So, a \textit{hidden guard set} is a hidden set which is also a guard set and the entire polygon is visible from some points in it. Finding the minimum or maximum guard set in a polygon can be considered under standard or orthogonal visibility. In the standard Version, Shermer~\cite{shermer1989hiding} was the first to show that the problem of finding a maximum hidden and a minimum hidden guard sets is NP-hard. He established bounds on the maximum size of hidden sets for some polygons. In 1999, Eidenbenz~\cite{eidenbenz1999many} showed that the problem of placing a maximum hiding guards is almost as hard to approximate as the Maximum Clique problem and it cannot be approximated by any polynomial-time algorithm with an approximation ratio of $ n\epsilon $ for some $ \epsilon > 0 $, unless $ P = NP $. He showed that for a simple polygon with holes, it is already true and for a simple polygon without holes there is a constant $ \epsilon > 0 $ such that the problem cannot be approximated with an approximation ratio of $ 1 + \epsilon $. Hurtado and et.al.~\cite{hurtado1996hiding} studied hidden sets of points in arrangements of segments, they  provided bounds for its maximum size. Biswas and et. al.~\cite{biswas1994algorithms} studied similar problem called as the maximum independent set of visibility graph instead of maximum hidden guard sets. They proposed an $ O(n^{3}) $ time algorithm for finding the maximum independent set of the convex visibility graph for a restricted class of simple polygons. A stair-case path is monotone along both the $ x $ axis and $ y $ axis directions. Two points inside an orthogonal polygon are visible under  \textit{stair-case visibility}  if they can be connected by a stair-case path without intersecting its exterior. Eidenbenz and Stamm~\cite{eidenbenz2000maximum} studied the problem of finding a maximum clique in the visibility graph of a simple polygon with $ n $ vertices. They showed that if the input polygons are allowed to contain holes no polynomial time algorithm can achieve an approximation ratio of  $ \frac{n^{1/8-\epsilon}}{4} $ for any $ \epsilon> 0 $, unless $ NP = P $. They proposed an $ O(n^{3}) $ algorithm for the maximum clique problem on visibility graphs for polygons without holes. Their algorithm also finds the maximum weight clique, if the polygon vertices are weighted. They also showed that the problem of partitioning the vertices of a visibility graph of a polygon into a minimum number of cliques is APX-hard for polygons without holes. Later, they are proved tight approximability results on finding minimum hidden guard sets in polygons~\cite{eidenbenz2006finding,eidenbenz2002optimum}. Because of the maximum hidden vertex set problem on a given simple polygon is NP-hard Bajuelos and et. al.~\cite{bajuelos2008estimating} proposes some approximation algorithms to solve this problem, first two based on greedy constructive search and the other are based on the general meta-heuristics Simulated Annealing and Genetic Algorithms. Their solutions are very satisfactory in the sense that they are always close to optimal with an approximation ratio of $ 1.7 $, for arbitrary polygons and with an approximation ratio of $ 1.5 $ for orthogonal polygons. Also, they showed that on average the maximum number of hidden vertices in a simple polygon (arbitrary or orthogonal) with $ n $ vertices is $ n/4 $. Ghosh and et. al.~\cite{ghosh2007visibility} presented an algorithm for computing the maximum clique in the visibility graph $ G $ of a simple polygon $ P $ in $ O(n^{2}e) $ time, where $ n $ and $ e $ are number of vertices and edges of $ G $ respectively and also presented an $ O(ne) $ time algorithm for computing the maximum hidden vertex set in the visibility graph $ G $ of a convex fan $ P $. Kranakis and et. al.~\cite{kranakis2009inapproximability} introduced a new hiding problem as called the max hidden edge set problem. given a polygon $ P $, find a minimum set of edges $ S $ of the polygon such that any straight line segment crossing the polygon intersects at least one of the edges in $ S $. They proved the APX-hardness of the problem for polygons without holes. Cannon and et. al.~\cite{cannon2012hidden} considered guarding classes of simple polygons using mobile guards (polygon edges and diagonals) under the constraint that no two guards may see each other. They provided a nearly complete set of answers to existence questions of open and closed edge, diagonal, and mobile guards in simple, orthogonal, monotone, and star-shaped polygons like that every monotone or star-shaped polygon can be guarded using hidden open mobile (edge or diagonal) guards, but not necessarily with hidden open edge or hidden open diagonal guards. Bajuelos and et. al.~\cite{bajuelos2008escondiendo} studied the maximum hidden vertex set problem for two types of polygons spirals and histograms and proposed a linear algorithm. In this paper, we introduce the problem of finding the minimum number of hidden guards to cover a polygon $ P $ under orthogonal visibility. First, we present a linear-time algorithm for guarding monotone orthogonal polygons, we know that there is a  linear-time algorithm for the problem~\cite{gewali1992covering,lingas2007note}, but, our algorithm is purely geometric. After that, we present an exact algorithm for finding minimum hidden guard set for histogram polygon and a 2-approximation algorithm for monotone orthogonal polygon under orthogonal visibility. In the following, visibility means orthogonal visibility and guarding is under orthogonal visibility and monotonicity means $ x $-monotonicity unless explicitly expressed(mentioned). The time complexity of hidden guarding is remained open, even for monotone polygon.
\begin{figure}
	\centering
	\includegraphics[width=\textwidth]{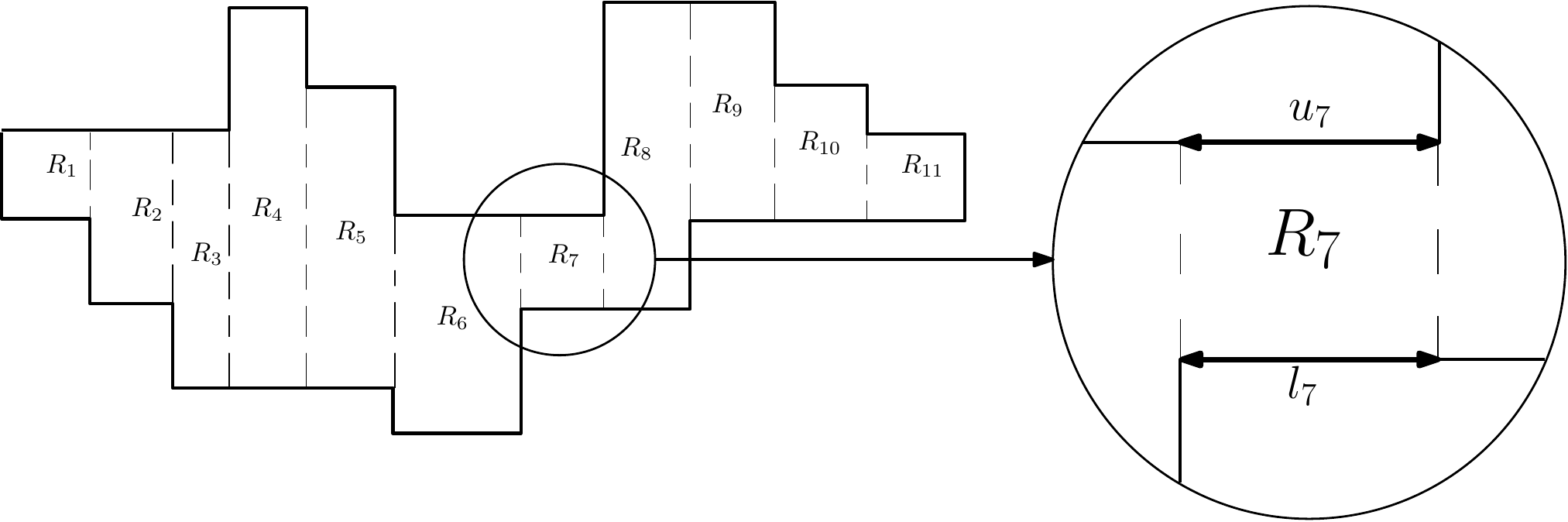}
	\caption{Illumination of the vertical decomposition and its notations.}
	\label{fi:fig1}
\end{figure}
\section{Preliminaries}
\label{ss:ss1}
Assume we decompose a simple orthogonal and $ x $-monotone polygon $ P $ with $ n $ vertices into rectangles obtained by extending the vertical edges incident to the reflex vertices of $ P $. A reflex vertex has an interior angle $ \frac{3\pi}{2} $ while convex vertices have an interior angle of $ \frac{\pi}{2} $. It is clear that every orthogonal polygon with $ n $ vertices has $ \frac{n-4}{2} $ reflex vertices. So, after the decomposition of $ P $, $ \frac{n-2}{2} $ rectangles will be obtained, exactly. Let $R =\{R_{1},R_{2},\dots,R_{m}\} $, where $ m = \frac{n-2}{2} $, be the set of rectangles, ordered from left to right according to $ x $-coordinate of their left edges. In the other words, after the decomposition, we named rectangular parts $ R_{1},R_{2},\dots,R_{m} $ from left to right. It is shown in Figure~\ref{fi:fig1}. We name the upper and lower horizontal edges of $ R_{i} $ by $ u_{i} $ and $ l_{i} $, respectively. Assume that $  U =\{u_{1},u_{2},\dots,u_{m} \}  $ and $ L=\{l_{1},l_{2},\dots,l_{m} \} $, {\small $1\leq i \leq m $}. For a horizontal line segment $ s $, we denote the $ x $-coordinate of the \textbf{left vertex} of $ s $ by $ x(s) $ that means the $ x $-coordinate of line segment $ s $. Also, we denote the $ y $-coordinate of line segment $ s $ by $ y(s) $. For a vertical line segment $ s' $, we denote the $ x $-coordinate of $ s' $ by $ x(s') $. Similarly, for a point $ p_0 $, $ x $-coordinate and $ y $-coordinate of $ p_0 $ is denoted by $ x(p_0) $ and $ y(p_0) $, respectively. Without reducing generality, we suppose that for every two vertical edge $ e_i $ and $ e_j $ ($ i\neq j $), $ x(e_i)\neq x(e_j) $, so, it seems obvious that for all {\small $ 1 \leq i \leq m-1 $}, $ y(u_{i})=y(u_{i+1}) $ or $ y(l_{i})=y(l_{i+1}) $. So, we denote the edge of $ P $ that contains $ u_{i} $  by $ e(u_{i}) $ and the edge of $ P $ that contains $ l_{i} $  by $ e(l_{i}) $. Let $ E_{U}=\{e(u_{i})|1 \leq i \leq m\} $ and $ E_{L}=\{e(l_{i})|1 \leq i \leq m\} $, the sets of edges ordered from left to right. In a set $ E=E_u\cup E_l $ of horizontal edges of $ P $, $ e_{j} $ is named \textit{local maximum} if $ y(e_{j})> y(e_{j-1}) $ and $ y(e_{j})> y(e_{j+1}) $ and similarly, $ e_{k} $ is named \textit{local minimum} if $ y(e_{k})< y(e_{k-1}) $ and $ y(e_{k})< y(e_{k+1}) $. If edge $ \epsilon_1 \in E_U $ be a local maximum, then the internal angles of its both endpoints are equal to $ \frac{\pi}{2} $ and if $ \epsilon_2 $ be a local minimum, then the internal angles of its two endpoints are $ \frac{3\pi}{2} $. If edge $ \epsilon_3 \in E_L $ be a local minimum, then the internal angles of its both endpoints are equal to $ \frac{\pi}{2} $ and if $ \epsilon_4 $ be a local maximum, then the internal angles of its two endpoints are $ \frac{3\pi}{2} $. In this paper, if a horizontal edge $ \epsilon'\in E $ has two endpoints of angle $ \frac{\pi}{2} $, we call it \textit{tooth} and if a horizontal edge $ \epsilon\in E $ has two endpoints of angle $ \frac{3\pi}{2} $, we call it \textit{dent}. $ u_{m}$ (or $ l_{m}$) is named \textit{local maximum} if $ e( u_{m}) $ (or $ e(l_{m}) $) is local maximum, and $ u_{n}$ (or $ l_{n} $) is named \textit{local minimum} if  $ e( u_{n}) $ (or $ e(l_{n} )$) is local minimum. Every rectangle $ R_i $ has height $ h_i=\abs{y(u_i)-y(l_i)} $ , so, in the set $ R $, $ R_{l} $ is named \textit{local maximum} if $ h_l>h_{l-1} $ and $ h_l>h_{l+1} $, and $ R_y $ is named \textit{local minimum} if $ h_y<h_{y-1} $ and $ h_y<h_{y+1} $. The rectangle that has only one neighbor is named \textit{source}, regularly, every rectangle has two neighbor except the first and last ones. So, the first and last rectangle is called source. Two axis-parallel segments $ l $ and $ l' $ are defined as \textit{weak visible} if an axis-parallel line segment could be drawn from some point of $ l $ to some point of $ l' $ that does not intersect $ P $. If polygon $ P $ is $ x $-monotone and also $ y $-monotone, then $ P $ is named \textit{orthoconvex} polygon. 
\begin{figure}
	\centering
	\includegraphics[width=\textwidth]{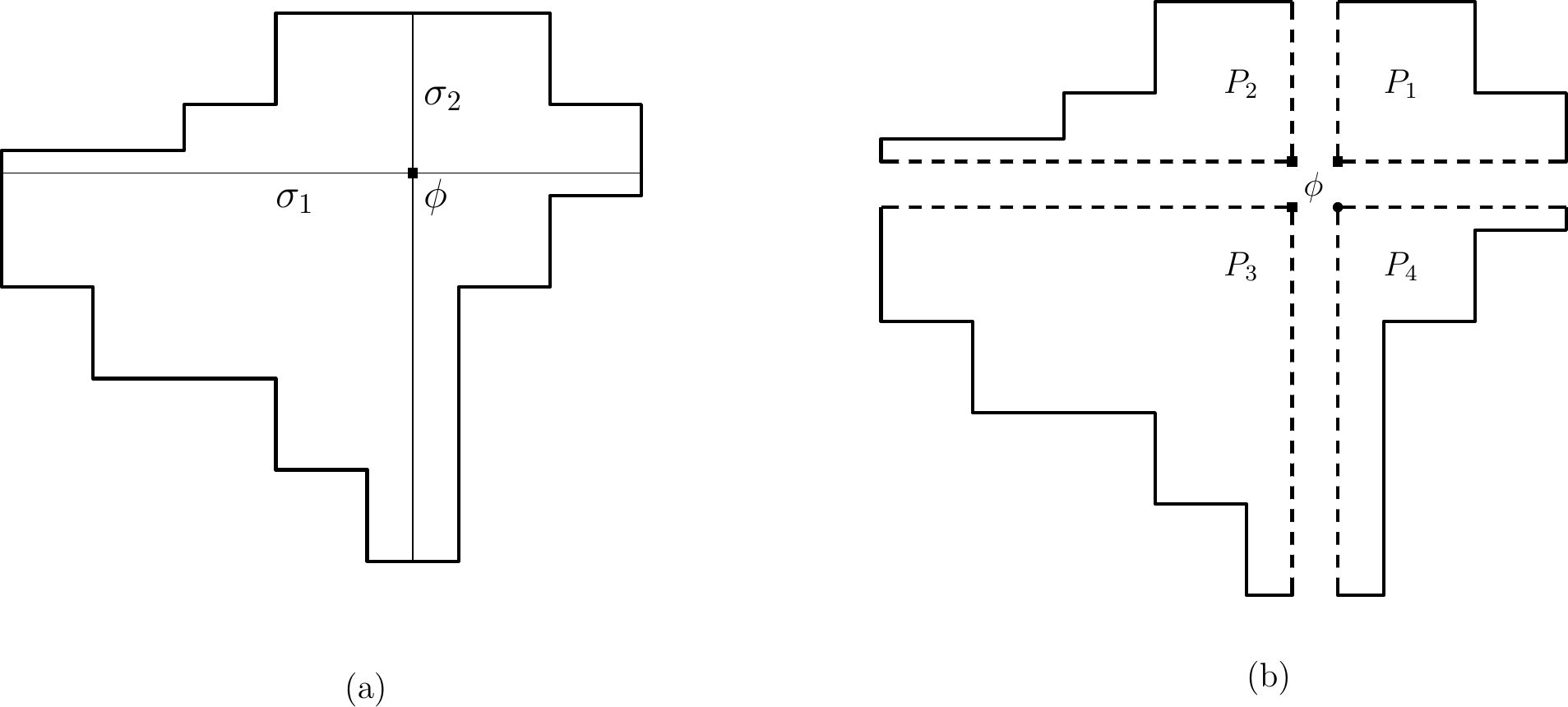}
	\caption{(a)An orthoconvex polygon $ P $ that has $ \sigma_{1} $ and $ \sigma_{2} $. Point $ \phi $ is the intersection of $ \sigma_{1} $ and $ \sigma_{2} $. (b)The decomposition of $ P $ into orthogonal fan polygons $ P_{1}$,$ P_{2}$,$ P_{3}$ and $ P_{4}$.}
	\label{fi:fig2}
\end{figure}
\begin{lemma}
 \label{le:lemma001}
   For any orthoconvex polygon $ P $, if there exists a horizontal line segment $ \sigma_{1} $  which is connecting the leftmost and the rightmost vertical edges of $ P $ such that $ \sigma_{1}\in P $ and there exists a vertical line segment $ \sigma_{2} $  which is connecting the upper and the lower horizontal edges of $ P $ such that $ \sigma_{2}\in P $, then $ P $ has a kernel. If guard $ g $ occurs in the kernel, every point in $ P $ is guarded by it. See figure~\ref{fi:fig2}.
\end{lemma}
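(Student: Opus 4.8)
The plan is to show that the intersection point $ \phi = \sigma_1 \cap \sigma_2 $ lies in the kernel, i.e.\ that $ \phi $ orthogonally sees every point of $ P $; once this is established the kernel is nonempty, and by the definition of the kernel any guard $ g $ placed there sees all of $ P $. First I would check that $ \phi $ is well defined. Since $ \sigma_1 $ joins the leftmost and rightmost vertical edges it spans the full $ x $-range of $ P $, and since $ \sigma_2 $ joins the topmost and bottommost horizontal edges it spans the full $ y $-range. Hence the vertical line $ x = x(\sigma_2) $ crosses $ \sigma_1 $ and the horizontal line $ y = y(\sigma_1) $ crosses $ \sigma_2 $ in the common point $ \phi = (x(\sigma_2), y(\sigma_1)) \in P $.

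Next I would fix an arbitrary $ p \in P $ and write $ \phi = (a,b) $ and $ p = (a',b') $. Orthogonal visibility of $ p $ from $ \phi $ means precisely that the orthogonal bounding rectangle $ Q $ of $ \phi $ and $ p $ satisfies $ Q \subseteq P $, so the whole task reduces to proving this containment. By symmetry under reflection across the two axes through $ \phi $, I may assume $ a' \geq a $ and $ b' \geq b $, so that $ Q = [a,a'] \times [b,b'] $.

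The core of the argument uses orthoconvexity, i.e.\ that every vertical line meets $ P $ in a single segment ($ x $-monotonicity) and every horizontal line meets $ P $ in a single segment ($ y $-monotonicity), together with the spanning property of $ \sigma_1 $ and $ \sigma_2 $. The bottom edge of $ Q $ lies on the line $ y = b $ and is contained in $ \sigma_1 \subseteq P $ because $ \sigma_1 $ covers the full $ x $-range. For the top edge I apply $ y $-monotonicity at height $ b' $: the horizontal slice $ P \cap \{\, y = b' \,\} $ is a single interval that contains both $ (a,b') \in \sigma_2 $ and $ (a',b') = p $, hence it contains the whole segment $ [a,a'] \times \{b'\} $. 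Finally, to fill $ Q $ I apply $ x $-monotonicity column by column: for each $ t $ with $ a \leq t \leq a' $ the vertical slice $ P \cap \{\, x = t \,\} $ is a single interval containing $ (t,b) $ (on $ \sigma_1 $) and $ (t,b') $ (on the top edge just certified), so it contains $ \{t\} \times [b,b'] $. Ranging over $ t $ yields $ Q \subseteq P $, which is exactly orthogonal visibility of $ p $ from $ \phi $.

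The main obstacle is organizing the two monotonicity directions and the two spanning segments so that the boundary of $ Q $ is certified to lie in $ P $ before the interior is filled: the spanning hypotheses on $ \sigma_1 $ and $ \sigma_2 $ are precisely what guarantee the bottom edge and, through one horizontal slice, the top edge, after which single-interval monotonicity closes the gap. The remaining three positions of $ p $ relative to $ \phi $ follow by the identical argument after reflecting, since both the orthoconvexity and the spanning segments are symmetric. Therefore $ \phi $ orthogonally sees all of $ P $, so the kernel is nonempty and the stated guarding property holds.
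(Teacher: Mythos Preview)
Your argument is correct and reaches the same conclusion as the paper's proof---that the intersection point $\phi=\sigma_1\cap\sigma_2$ lies in the kernel---but the route is somewhat different. The paper observes that $\sigma_1$ and $\sigma_2$ cut $P$ into four pieces $P_1,P_2,P_3,P_4$, asserts that each piece is an \emph{orthogonal fan polygon} with common core vertex $\phi$, and then invokes the known fact that a fan polygon is entirely r-visible from its core. Your proof carries out essentially the same quadrant reduction (your symmetry step), but instead of appealing to the fan-polygon property you prove it from scratch: you certify two opposite sides of the bounding rectangle $Q$ using the spanning segments $\sigma_1,\sigma_2$, and then fill $Q$ column by column via single-interval monotonicity. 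This is more self-contained and makes explicit exactly where each hypothesis (orthoconvexity, the two spanning segments) is used; the paper's version is shorter but relies on the reader already knowing why an orthogonal fan is star-shaped from its core.
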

\begin{proof}
If $ \sigma_{1} $ connects the leftmost and the rightmost vertical edges of $ P $ and $ \sigma_{2} $  connects the upper and the lower horizontal edges of $ P $ then they have an intersection $\phi$ that is contained in $ P $. $\sigma_{1} $ and $ \sigma_{2} $ decompose $P$ into 4 sub-polygons $ P_{1}$,$ P_{2}$,$ P_{3}$ and $ P_{4}$. All of obtained sub-polygons are orthogonal fan polygons and $\phi$ is their core vertex, jointly. In every part, the entire sub-polygon is visible from  $ \phi $ and also it is on the kernel. Therefore, $ \phi $ belongs to the kernel of $ P $ and if guard $ g $ occurs in the kernel, every point in $ P $ is guarded by it.
\end{proof}
Every polygon $ P $ divides the plane into 3 regions, an interior region bounded by the polygon, an exterior region containing all of the nearby and far away exterior points and boundary region containing all points on the boundary of polygon as denoted $int(P)$, $ext(P)$ and $ bound(P) $ so that $ P=int(P) \cup bound(P) $. If $ e $ be a line segment, $ int(e) $ is $ e $ without its two endpoints. If $ e $ be horizontal, then $ left(e) $ and $ right(e) $ are referred to left and right endpoints of $ e $, respectively. Also, if $ e $ be vertical, then $ top(e) $ and $ down(e) $ are referred to top and down endpoints of $ e $, respectively. The \textit{bounding box} of a polygon is the axis-aligned minimum area rectangle(box) within which all the points of polygon lie. If an $ x $-monotone orthogonal polygon and its bounding box has a horizontal edge in common, completely, the polygon is called \textit{histogram}, this common edge is called \textit{base}. If an orthoconvex polygon and its bounding box has a horizontal edge in common, completely, the polygon is called \textit{pyramid}, this common edge is called \textit{base}, too.

\section{An Algorithm for Guarding Monotone Art Galleries}
In the following, we present a linear-time exact algorithm for guarding orthogonal and $ x $-monotone polygons. Our algorithm uses a geometric approach instead of graph theoretical approach to obtain the result. Therefore, We can find the exact geometric locations of the point guards. 
\subsection{The Decomposition of an $ x $-monotone Orthogonal Polygon into the Balanced Ones}
Let $ P $ be an orthogonal $ x $-monotone polygon with $ n $ vertices. Start from the leftmost vertical edge of $ P $, as denoted, $ \varepsilon $ , propagate a light beam in rectilinear (straight-line) path perpendicular to $ \varepsilon $ and therefore collinear with the $ X $ axis. All or part of this light beam passes through some rectangles of set $ R $(name this subset $ R_{\rho} $) and these rectangles together make a sub-polygon $ \rho $ of $ P $. For every $ u_{i}$ and $l_{i} $ belongs to $ \rho $, it is established that $ \min_{u_{i} \in \rho} (y(u_{i}))\geq \max_{l_{j} \in \rho}(y(l_{j})) $. So, there exists a horizontal line segment $ \sigma $  which is connecting the leftmost and rightmost vertical edges of $ \rho $ such that $ \sigma$ contains in $\rho $, completely. If a polygon like $ \rho $ has this property, we say the polygon is \textit{balanced}. If we want to guard a balanced $ x $-monotone polygon with minimum number of guards, it is possible to set all guards on its $ \sigma $. So, If we remove $ \rho $ from $ P $ and iterate the described operations, $ P $ is decomposed into several balanced $ x $-monotone polygon. Now, we describe a linear-time algorithm for decomposition $ P $ to the balanced sub-polygons.
\begin{figure}
	\centering
	\includegraphics[width=\textwidth]{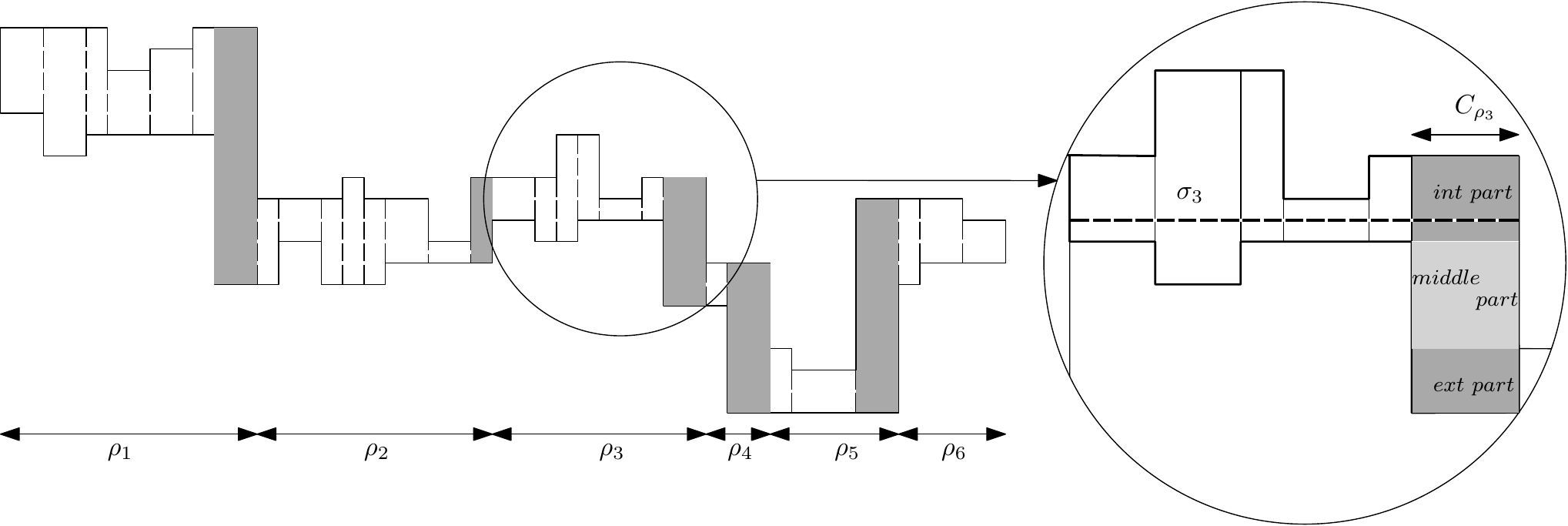}
	\caption{(a)Decomposition of monotone polygon into balanced sub-polygons and their vertical decomposition. Dark gray rectangles are cut. (b) A balanced sub-polygon and its align segment, cut rectangle part of sub-polygons has 3 disjoint parts.}
	\label{fi:fig3}
\end{figure}
\begin{algorithm}[]
	\KwData{an $ x $-monotone polygon with $ n $ vertices}
	\KwResult{the minimum number of balaced $ x $-monotone polygon}
	(1)Set $ min_u=u_1$ and $ max_l=l_1 $\;
	(2)\ForEach{rectangle $ R_{i} $ belongs to $ R $}{
		(3)\If{ $ u_i>max_l $ or $ l_i<min_u $}{
			remove $ R_1,\dots,R_{i-1} $ from $ R $\;
			refresh the index of  $ R $ starting with $ 1 $\;
			go to 1\;
		}
		
		(4)Compute $min_u=\min(min_u,u_{i})$ and $max_l=\max(max_l,l_{i})$\;
}
\caption{Decomposition $ P $ into the balanced sub-polygons.}
\label{al:algo1}
\end{algorithm}
If the condition in item 3 is satisfied, a balanced sub-polygon $ \rho $ is determined. So, we remove it from $ P $ and iterate algorithm for $ P-\rho $. We remove the rectangles belong to $ \rho $ from $ R $. We know the members of $ R $ are ordered from left to right and labeled from $ 1 $, After removing, we relabel the remained members from $ 1 $, again, to simplify the description of the algorithm. Certainly, the same processes will be occurred for $ U $ and $ L $. The number of iterations is equal to the cardinality of $ R $ (in the start). Therefore, the time complexity for the decomposition $ P $ into balanced polygons is linear. Remember the vertical decomposition of the polygon into rectangles, the last rectangle of each balanced sub-polygons (excepted the last sub-polygon) is named \textit{cut rectangle}. The cut rectangle of the sub-polygon $ \rho $ is denoted as $ C_{\rho} $. For an illustration see figure~\ref{fi:fig3}(a). Every cut rectangle has 3 parts which are obtained by extending horizontal edge of two adjacent rectangles, the one has intersection with the align segment is named int-part, two other ones are named middle-part and ext-part, see figure~\ref{fi:fig3}(b). The decomposition $ P $ into the minimum number of balanced sub-polygons is not unique. Assume that $ P $ is decomposed into the balanced sub-polygons $ \rho_{1},\rho_{2},\dots,\rho_{k} $, for every $ \rho_i $ and $ \rho_{i+1} $ we can replace them with $ \rho_i-C_{\rho_i} $ and $ \rho_{i+1}\cup C_{\rho_i}$. Bt this change, both of them remained balanced. But, we do this change only in certain circumstances that leads to simplicity. The last rectangle of $ \rho_i $ is $ C_{\rho_i}=R_{x+1}\in R $ and its previous rectangle be $ R_x \in R $. In every iteration of algorithm~\ref{al:algo1}, if $ R_x $ is not source and $ R_x $ is local minimum then do $ \rho_i=\rho_i-C_{\rho_i} $ and $ \rho_{i+1}=\rho_{i+1}\cup C_{\rho_i}$. Later, we will realize that the guarding of these two is more cost effective. Therefore, we modify algorithm~\ref{al:algo1} to algorithm~\ref{al:algo2}.
\begin{algorithm}[]
	\KwData{an $ x $-monotone polygon with $ n $ vertices}
	\KwResult{the minimum number of balaced $ x $-monotone polygon}
	(1)Set $ min_u=u_1$ and $ max_l=l_1 $\;
	(2)\ForEach{rectangle $ R_{i} $ belongs to $ R $}{
		(3)\If{ $ u_i>max_l $ or $ l_i<min_u $}{
			\eIf{$ i-1\neq 1 $ and $ h_{i-1}>h_i $ and $ h_{i-1}>h_{i-2} $}{
			   remove $ R_1,\dots,R_{i-1} $ from $ R $\;}{
			   remove $ R_1,\dots,R_{i-2} $ from $ R $\;
			   }
			 
			refresh the index of  $ R $ starting with $ 1 $\;
			go to 1\;
		}
		
		(4)Compute $min_u=\min(min_u,u_{i})$ and $max_l=\max(max_l,l_{i})$\;
}
\caption{The modified algorithm for decomposition $ P $ into the balanced sub-polygons.}
\label{al:algo2}
\end{algorithm}  
Every balanced polygon like $ \rho $ has a horizontal line-segment like $ \sigma $ which is connecting the leftmost and rightmost edges of $ \rho $ that is called \textit{align segment}. So, the entire $ \rho $ is visible from at least one point of $ \sigma $, i.e., $ \rho $ is weak visible from $ \sigma $.  Assume that $ P $ is decomposed into the balanced sub-polygons $ \rho_{1},\rho_{2},\dots,\rho_{k} $ and $ \sigma_{1},\sigma_{2},\dots,\sigma_{k} $ be their align segments, respectively. If $ 1\leq i,j\leq k $ and $\abs{i-j}>1$, for every $ p\in \sigma_i $ and $ q\in \sigma_j $, $ p $ and $ q $ is not visible from each others. So, if $j=i+1$, the only visible points from one segment to another is the right endpoint of $ \sigma_i  $ and the left endpoint of $ \sigma_j  $ i.e. for every point $ p\in int(\sigma_{i})$, there is no point belongs to $int(\sigma_{j}) $ that is visible from $ p $. Due to this fact, if we optimally cover $ P $ so that all the guards are located on the align segments, guarding each of sub-polygons can be done independently i.e. the minimum number of guards for guarding the entire polygon is the sum of the minimum number of guards that are necessary for every sub-polygons. 
\begin{claim}
There exists an optimal guard set $ G={g_1, g_2,\dots,g_{opt}} $ for a monotone orthogonal polygon $ P $ so that all guards are located on the align segments.
\end{claim}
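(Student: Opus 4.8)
The plan is to prove the claim by an exchange (normalization) argument: starting from an arbitrary optimal guard set $G^{*}$, I would push every guard vertically onto the align segment of the balanced sub-polygon that contains it, and then argue that the resulting family still covers $P$ while using no more guards than $G^{*}$. Concretely, for a guard $g=(x(g),y(g))$ lying in $\rho_i$ I define its image $g'=(x(g),y(\sigma_i))$, the vertical projection of $g$ onto the align segment $\sigma_i$ (a guard sitting on a shared cut edge is assigned to the left sub-polygon). Write $\pi(g)=g'$ and $\pi(G^{*})=\{\pi(g):g\in G^{*}\}$. Since $\pi$ can only merge distinct guards, $\abs{\pi(G^{*})}\le\abs{G^{*}}=opt$, so it suffices to show that $\pi(G^{*})$ covers $P$: minimality of $opt$ then forces $\abs{\pi(G^{*})}=opt$, giving an optimal guard set lying entirely on the align segments.

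The core geometric step is that projection never destroys orthogonal visibility \emph{within} a sub-polygon. Fix $\rho_i$ and write $[l(x),u(x)]$ for the vertical extent of $P$ at abscissa $x$; the defining balanced property of $\rho_i$ gives $l(x)\le y(\sigma_i)\le u(x)$ for every $x$ in the range of $\rho_i$. Let $g,p\in\rho_i$ with $p$ r-visible from $g$ and, say, $x(g)\le x(p)$. Orthogonal visibility means that for every $x\in[x(g),x(p)]$ the column $[l(x),u(x)]$ contains both $y(g)$ and $y(p)$. Combining $l(x)\le y(p)$ with $l(x)\le y(\sigma_i)$ yields $l(x)\le\min(y(\sigma_i),y(p))$, and symmetrically $u(x)\ge\max(y(\sigma_i),y(p))$; hence the bounding box of $g'$ and $p$ lies in $P$ and $p$ is r-visible from $g'$. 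Thus each guard, after projection, still sees everything it used to see inside its own sub-polygon.

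The main obstacle is the \emph{spillover} across a cut: under orthogonal visibility a guard $g\in\rho_i$ can see points of the neighbour $\rho_{i+1}$ through the overlap window at the cut edge, and projecting $g$ down to $\sigma_i$ can lose exactly those points. For instance, a guard placed high in the last rectangle of $\rho_i$ may see a point $p\in\rho_{i+1}$ whose column does not reach the align height, so that the lower projected point $g'$ no longer reaches $p$. The within-sub-polygon step says nothing about such $p$, so I cannot conclude coverage of $P$ from it alone; I expect this to be the hard part. I would resolve it using the fact established just before the claim, namely that adjacent align segments are mutually visible only at their shared endpoints and non-adjacent ones not at all. This limited interaction should let me argue, by a left-to-right induction on $\rho_1,\dots,\rho_k$, that any point of $\rho_{i+1}$ reachable as spillover from $\rho_i$ is also reachable from $\sigma_{i+1}$, so that reassigning responsibility to the neighbour's align segment covers it without adding guards.

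Finally, with spillover handled, I would assemble the pieces: the (possibly reassigned) projection yields a guard set on the align segments that covers every sub-polygon internally; the inter-segment independence established before the claim guarantees that these per-sub-polygon covers do not interfere; and the bound $\abs{\pi(G^{*})}\le opt$ upgrades to equality. This produces an optimal guard set all of whose guards lie on the align segments, proving the claim. The delicate point to get exactly right is the bookkeeping in the spillover/reassignment step, so that the total guard count is provably not increased.
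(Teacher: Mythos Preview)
Your approach is \emph{not} the paper's. The paper does not prove the Claim by a normalization/exchange argument at all; immediately after stating it, it says ``To prove this claim, we present an algorithm in the next sections and prove that its output is optimal.'' Concretely, it exhibits Algorithm~3, which places one guard per (merged) orthogonal shadow of a tooth edge, all on the align segment, and then Lemma~\ref{le:lemma3} combines the lower bound from Lemma~\ref{le:le02} (each tooth edge forces a guard in its shadow) with the fact that those shadows meet at most pairwise. So the Claim is obtained constructively, as a corollary of the matching lower/upper bound, not by massaging an arbitrary optimal $G^{*}$.

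Your within-sub-polygon projection step is fine, but the spillover step is a genuine gap, not just bookkeeping. The fact you invoke---that $\operatorname{int}(\sigma_i)$ and $\operatorname{int}(\sigma_{i+1})$ are mutually invisible---controls interaction \emph{between the align segments}, not spillover of an arbitrary guard $g\in\rho_i$ into $\rho_{i+1}$; it gives you nothing about whether $\pi(g)$ still sees a given $p\in\rho_{i+1}$. Your proposed fix, ``any point of $\rho_{i+1}$ reachable as spillover from $\rho_i$ is also reachable from $\sigma_{i+1}$,'' is trivially true (all of $\rho_{i+1}$ is weakly visible from $\sigma_{i+1}$) and therefore vacuous: being reachable from the segment $\sigma_{i+1}$ does not mean being covered by some \emph{projected guard} on $\sigma_{i+1}$. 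Nothing in your plan rules out an optimal $G^{*}$ in which a point $p\in\rho_{i+1}$ is covered \emph{only} by a guard $g\in\rho_i$ placed high in the cut rectangle, with no guard of $G^{*}$ in $\rho_{i+1}$ whose projection lands at an $x$-coordinate that sees $p$; then $\pi(G^{*})$ simply fails to cover $p$, and ``reassigning responsibility'' would require adding a guard. To close this you would need an independent per-sub-polygon lower bound (e.g., the tooth-edge/shadow count the paper uses) showing that $G^{*}$ must already place enough guards inside each $\rho_j$---but once you have that, you are essentially reproducing the paper's constructive argument and the projection detour is no longer doing the work.
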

To prove this claim, we present an algorithm in the next sections and prove that its output is optimal.

\subsection{The Algorithm for Guarding the Balanced Sub-polygons}
\label{ss:ss01}
Given a balanced $ x $-monotone orthogonal $ P $ with $ n $ vertices, we present algorithm~\ref{al:algo3} to guard $ P $ using the minimum number of guards. After vertical decomposition, it is obtained the sets $ R $, $ U $, $ L $, $ E_L $ and $ E_U $ for the polygon $ P $. Because of being balanced, $ P $ has an align segment $ \sigma $ which is connecting the leftmost and rightmost edges of $ P $ and contained in it.
\begin{defini}
For a horizontal edge $ e $ of the polygon $ P $, the set of every point $p\in P$ which there is a point $ q \in e $ such that $ pq $ is a line segment normal to $ e $ and completely inside $ P $, is named \textit{orthogonal shadow} of $ e $, as denoted $ os_e $(for abbreviation).
\end{defini}
First, we find all tooth edges of the set $ E = E_L \cup E_U $ and call the obtained set as $ D $. For every $  d_i \in D $, we compute orthogonal shadow of $ d_i $ as $ os_{i} $. Let $ D=\{d_1, d_2,\dots,d_k\} $ and $ OS=\{os_1,os_2,\dots,os_k\} $, ordered from left to right by $ x $-coordination of their left vertical edges.
\begin{lemma}
\label{le:le02}
Every tooth edge $ e_d $ can be guarded only with a guard that is located in its orthogonal shadow $ os_{e_d} $, not anywhere else. 
\end{lemma}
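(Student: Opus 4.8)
The plan is to show that orthogonal visibility to even a single interior point of a tooth edge already forces the guard into the orthogonal shadow; covering the whole edge then trivially requires every contributing guard to lie in $os_{e_d}$. I would handle the two kinds of tooth by symmetry. Assume first that $e_d\in E_U$, so $e_d$ is a local maximum whose two endpoints are convex (interior angle $\tfrac{\pi}{2}$). Set up coordinates so that $e_d$ lies at height $y_0$ and spans the interval $[a,b]$ in $x$, and fix an arbitrary interior point $p=(x_p,y_0)$ with $a<x_p<b$.

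First I would record the two local exteriority facts that the tooth hypothesis supplies. Because $e_d$ is a local maximum of the upper chain, the region immediately above $e_d$ is exterior, so the point directly above $p$ lies in $ext(P)$. Because both endpoints of $e_d$ are convex, just to the left of $a$ and just to the right of $b$ at height $y_0$ the polygon is again exterior; consequently any connected horizontal segment at height $y_0$ that lies in $P$ and contains $x_p$ must stay inside $[a,b]$. These two statements are exactly where the tooth property (as opposed to a generic horizontal edge) is used, and deriving them cleanly from the $\tfrac{\pi}{2}$-angle conditions is the one step that needs care.

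Next, let $g=(x_g,y_g)$ be any guard that orthogonally sees $p$, so the axis-parallel bounding rectangle $B$ of $g$ and $p$ is contained in $P$. Since the point directly above $p$ is exterior, $B$ cannot extend above $y_0$; hence $y_g\le y_0$ and the top edge of $B$ sits at height $y_0$, spanning $[\min(x_p,x_g),\max(x_p,x_g)]$. This top edge is a connected subset of $P$ at height $y_0$ containing $x_p\in(a,b)$, so by the second exteriority fact I conclude $x_g\in[a,b]$. Finally, the vertical side of $B$ at $x=x_g$ is contained in $P$ and runs from $g$ up to $(x_g,y_0)$; because $x_g\in[a,b]$ this upper endpoint lies on $e_d$. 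Thus the vertical segment joining $g$ to a point of $e_d$ is normal to $e_d$ and contained in $P$, which is precisely the definition of $g\in os_{e_d}$.

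The case $e_d\in E_L$ (a lower tooth, i.e.\ a local minimum with two convex endpoints) is identical after reflecting in a horizontal axis, with ``above'' replaced by ``below.'' Since covering $e_d$ requires seeing each of its interior points, and every guard that sees such a point lies in $os_{e_d}$, no point outside the orthogonal shadow can contribute to guarding the tooth edge, as claimed. As indicated above, the main obstacle is the rigorous passage from the angle conditions to the two exteriority facts; once those are in hand, the rest is a direct unwinding of the r-visibility and orthogonal-shadow definitions.
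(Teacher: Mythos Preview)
Your argument is correct and rests on the same geometric fact the paper uses---namely that at the height of a tooth edge the polygon cannot extend horizontally beyond the two convex endpoints---but you package it differently. The paper argues by contradiction: assuming a guard $g_t$ with $x(g_t)<x(L_t)$ sees the \emph{entire} edge (hence the far endpoint $R_t$), the visibility rectangle produces a horizontal segment $BR_t\subset P$ at the tooth's height that strictly contains $e_t$, contradicting that $e_t$ is a boundary edge. You instead give a direct proof and prove a sharper statement: any guard that sees even a single interior point of $e_d$ already lies in $os_{e_d}$. Your formulation also closes two small gaps the paper leaves implicit---it handles the degenerate case $y_g=y_0$ cleanly, and it explicitly exhibits the vertical segment witnessing $g\in os_{e_d}$ once $x_g\in[a,b]$, rather than tacitly identifying ``not in the shadow'' with ``$x$-coordinate outside $[a,b]$.'' The cost is that your two exteriority facts need $x$-monotonicity to be justified globally (not just locally at the corners), which you flag but do not fully spell out; in the paper's setting this is immediate since the upper chain is $x$-monotone and $e_d$ is a local maximum there.
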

\begin{proof}
Proof by contradiction. Suppose that $ P $ is $ x $-monotone and the horizontal tooth edge $ e_t $ is guarded with $ g_t $ that is not located in $ os_{e_t} $, so, the point $ g_t $ is not in the $ x $-coordinate of any points on $ e_t $. Let the left and right endpoints of $ e_t $ be $ L_t $ and $ R_t $, respectively, and let $ x $-coordinate of $ g_t $ be less than $ x $-coordinate of $ L_t $ i.e. $ x(g_t)<x(L_t)$. The edge $ e_t $ is visible from $ g_t $, so, $ L_t $ and $ R_t $ are visible from $ g_t $. There is an axis-parallel rectangle spanned by the $ R_t $ and $ g_t $ is contained in $ P $. These two points do not have the same $ x $-coordinates and also $ y $-coordinates, $ 4 $ vertices of the rectangle are contained in $ P $ as $ R_t=(x(R_t),y(R_t)) $, $ A=(x(R_t),y(g_t)) $, $ B=(x(g_t),y(R_t)) $ and $ g_d=(x(g_t),y(g_t))$. So, the edge $ BR_t $ is contained in $ P $, too. It is impossible, because $ e_t \subset BR_t $ i.e. if an edge be a part of a line segment which is contained in the polygon, actually, it is not an edge.
\end{proof}
\begin{figure}
	\centering
	\includegraphics[width=\textwidth]{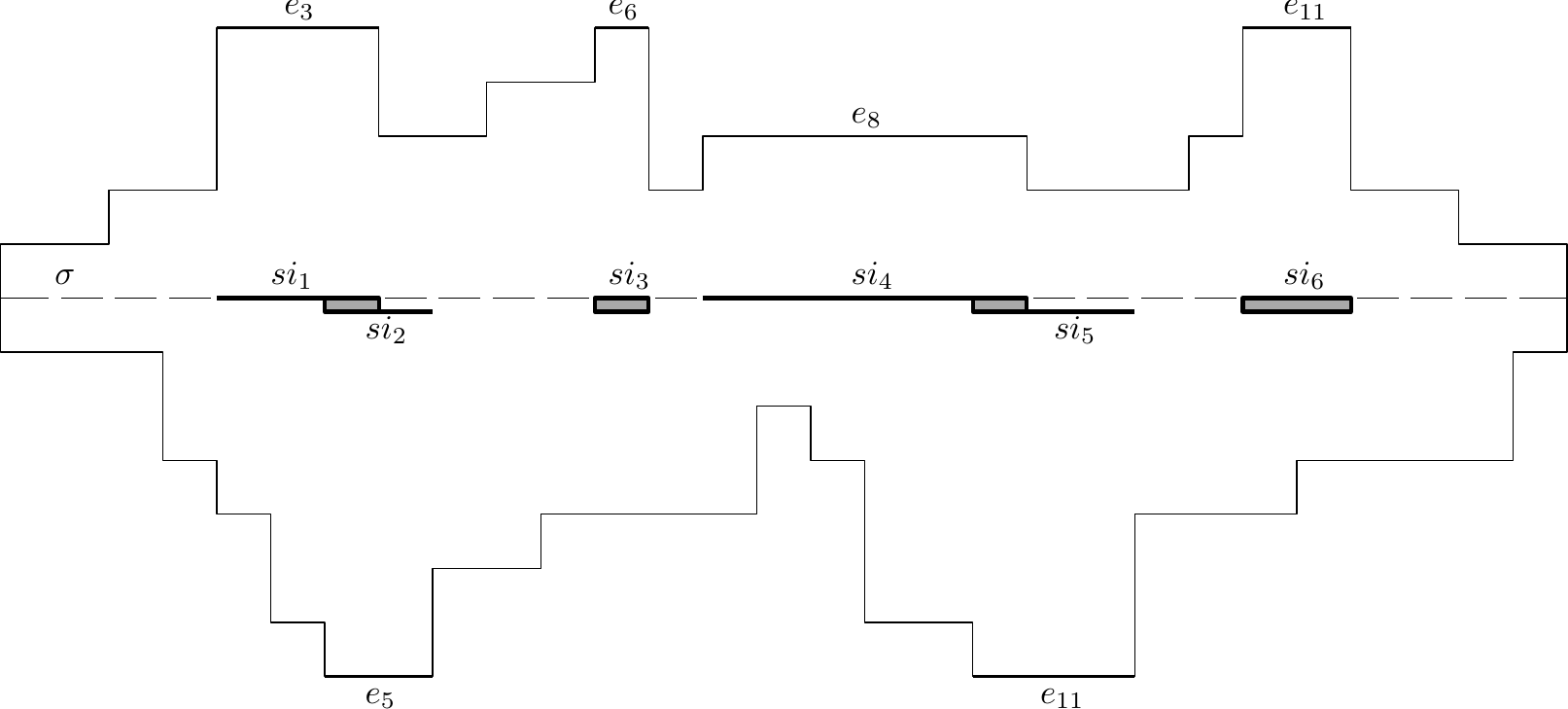}
	\caption{An Illustration of the algorithm, the bold edges are tooth and all bold parts of $ \sigma $ belong to $ SI $ that are the positions for guards.}
	\label{fi:fig4}
\end{figure}
If we want to guard the entire polygon $ P $, we must place guards so that there be at least one guard in every element of $ OS $. If tooth edge $ e_1 $ belongs to $ E_U $, another tooth edge $ e_2 $ belongs to $ E_L $ and $ os_{e_1}\cap os_{e_2}\neq \emptyset $ a guard on the intersection of them is sufficient to cover both of them. So, in the set $ OS $, if intersection of two members $ os_{e_1}$ and $ os_{e_2}$ be not empty then we remove them from $ OS $ and insert a new member as $ os_{e_1}\cap os_{e_2}  $. Note that before the replacements, the intersection of every $ 3 $ members of $ OS $ is empty and after them the cardinality of $ OS $ decrease to $ \kappa \leq k $. We want to place all guards on the align segment $ \sigma $ and the orthogonal shadow of every tooth edge has intersection with $ \sigma $. Let line segment $ si_i=os_i \cap \sigma $ and set $ SI=\{si_1,si_2,\dots,si_\kappa\} $ s.t. $ (\kappa \leq k) $, ordered from left to right by $ x $-coordination of their left endpoint. The intersection of every $ 2 $ members of $ SI $ is empty. The set $ SI $ is the positions for placing guards, one guard is located on an arbitrary point of every member of $ SI $, see figure~\ref{fi:fig4} Using this structure enable us to find the positions for locating the minimum number of guards in the balanced sub-polygon $ P $ in the linear time relative to $ n $. In algorithm~\ref{al:algo3}, the set $ SI $ is the positions for the optimum guard set and the variable $ m $ is the cardinality of the optimum guard set.
\begin{algorithm}[]
	\KwData{the horizontal edges of two chains of balanced $ x $-monotone polygon with $ n $ vertices ($ E_L $,$ E_U $)}
	\KwResult{the minimum number of guards ($ m $) and their positions ($ SI $)}
	Set $ m=0 $ and $ SI=\emptyset $\;
	\ForEach{horizontal edge $ e_{i} $ belongs to $ E_L $}{
		\If{ Angles of $ left(e_i)$ and $right(e_i)$ are equal to $\frac{\pi}{2}$}{
			$ A_i=(x(left(e_i)),y(\sigma) $\;
			$ B_i=(x(right(e_i)),y(\sigma) $\;
			Set segment $ si_i=A_iB_i $ and $ SI_L=SI_L \cup \{si_i $\}\;
			$ m++ $;
		}
	}
   \ForEach{horizontal edge $ e_{i} $ belongs to $ E_U $}{
		\If{ Angles of $ left(e_i)$ and $right(e_i)$ are equal to $\frac{\pi}{2}$}{
			$ A_i=(x(left(e_i)),y(\sigma) $\;
			$ B_i=(x(right(e_i)),y(\sigma) $\;
			Set segment $ si_i=A_iB_i $ and $ SI_U=SI_U \cup \{si_i\} $\;
			$ m++ $;
		}
   }
   Merge the sorted lists $ SI_L$ and $ SI_U$ as sorted list $ SI $\;
   \ForEach{horizontal segment $ si_{i} $ belongs to $ SI $}{
   		\If{$ si_i\cap si_{i+1}\neq \emptyset $}{
   			$ si_i=si_i\cap si_{i+1} $\;
   			$ SI=SI-\{si_{i+1}\}$\;
   			$ m-- $;
   		}
      }
\caption{Optimum guarding of a balanced polygon $ P $.}
\label{al:algo3}
\end{algorithm}
\\The positions of all guards are in the set $ SI $ and every elements of $ SI $ is a subset of align segment $ \sigma $, so, all guards are located on align segment $ \sigma $. The time complexity of the algorithm clearly is linear time corresponding to the cardinality of set $ E=E_L\cup E_U  $.
\begin{lemma}
\label{le:lemma3}
The minimum number of guards is equal to $ m $ that is obtained by algorithm~\ref{al:algo3} for guarding a balanced monotone orthogonal polygon $ P $.
\end{lemma}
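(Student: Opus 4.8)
The plan is to establish two inequalities: that the $m$ guards returned by Algorithm~\ref{al:algo3} actually cover $P$ (so $\mathrm{OPT}\le m$), and that every guard set for $P$ — with guards allowed \emph{anywhere}, not just on $\sigma$ — has size at least $m$ (so $\mathrm{OPT}\ge m$). Throughout I would work with the set $D=\{d_1,\dots,d_k\}$ of tooth edges and their orthogonal shadows $OS=\{os_1,\dots,os_k\}$, keeping in mind the two facts already available: Lemma~\ref{le:le02} (a tooth $d_i$ can be guarded only from inside $os_i$) and the structural remark that no point of $P$ lies in three shadows at once.

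For the upper bound I would argue feasibility by a pocket decomposition. Because $P$ is balanced, the align segment $\sigma$ lies entirely in $P$, so any two points of $\sigma$ see each other horizontally and a single guard on $\sigma$ covers all of $\sigma$. Using $x$-monotonicity, the lower boundary is a staircase whose local minima are exactly the lower teeth and whose local maxima are dents; extending verticals from the dents cuts the part of $P$ below $\sigma$ into \emph{pockets}, each containing one tooth at its bottom with monotone walls. A guard placed in $si_i=os_i\cap\sigma$ sits directly over that tooth, and the monotonicity of the walls makes the bounding rectangle to any point of the pocket lie in $P$, so the guard sees the whole pocket; the symmetric statement holds above $\sigma$. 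Since the merging step only moves a guard to a point of $si_{e_1}\cap si_{e_2}$, which still lies in both shadows, a merged guard sees both of its pockets. As the pockets together with $\sigma$ exhaust $P$, the $m$ guards cover $P$; the boundary (source) regions, where a chain is monotone and carries no tooth, are swept from the adjacent pocket's guard and would need a separate routine check.

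For the lower bound I would first reduce to covering the teeth. Let $G$ be any guard set for $P$. By Lemma~\ref{le:le02} each tooth $d_i$ is seen by a guard lying in $os_i$. A single guard lies in at most two shadows — it cannot lie in two shadows of the same chain, since consecutive same-chain teeth have disjoint $x$-projections, and it cannot lie in three at all — and if it lies in the shadows of two teeth then those shadows meet. Forming the graph $H$ on $D$ that joins two teeth whenever their shadows intersect, this shows $G$ induces a matching $M$ in $H$ (the pairs of teeth sharing a guard), whence $|G|\ge k-|M|\ge k-\nu(H)$, where $\nu(H)$ is the maximum matching number.

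It remains to prove that the number of merges performed by Algorithm~\ref{al:algo3} equals $\nu(H)$, which is the step I expect to be the main obstacle, as it is precisely the claim that no cleverer placement anywhere in $P$ beats the greedy pairing on $\sigma$. Here I would use the Helly property on a line: three pairwise intersecting shadows would share a common point, contradicting the no-triple-overlap fact, so $H$ is triangle-free; being the intersection graph of intervals — two shadows meet exactly when their $x$-projections, the segments $si_i$, overlap — it is chordal, hence a forest. The algorithm scans the $si_i$ left to right and pairs the current leftmost segment $a$ with its immediate successor $b$ when they overlap (note that $a$ overlaps some later segment iff it overlaps its immediate successor). I would then show $a$–$b$ is always a pendant edge of $H$: if $b$ stays inside $a$ it can have no other neighbour without forcing a triple overlap inside $a$, so $b$ is a leaf; and if $b$ extends past $a$, no third segment can overlap $a$ without a triple overlap, so $a$ is a leaf. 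A pendant edge lies in some maximum matching, and since the shrunken segment $a\cap b$ meets no later segment (again by same-chain disjointness), the scan continues on the remaining forest, and an induction gives that the greedy attains $\nu(H)$. Consequently $m=k-\nu(H)$, which together with the two bounds yields $\mathrm{OPT}=m$.
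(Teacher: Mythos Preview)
Your argument is sound and considerably more explicit than the paper's. The paper's own proof of this lemma is essentially two sentences: it asserts sufficiency by saying each $si_i$ lies in the kernel of an r-star piece of a decomposition of $P$, and it asserts necessity by pointing to Lemma~\ref{le:le02} without further analysis. In particular, the paper never isolates the step you correctly identify as the crux---why the greedy left-to-right merging of overlapping $si$'s cannot be beaten by some cleverer placement---and offers no matching argument at all. Your route via the interval-intersection graph $H$, observing that it is a triangle-free interval graph and hence a forest, and that the scan always selects a pendant edge (so the greedy attains $\nu(H)$), supplies precisely the justification for the lower bound $\mathrm{OPT}\ge k-\nu(H)=m$ that the paper leaves implicit. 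The observation that the shrunken segment $a\cap b$ is disjoint from every later $si$ (because any later segment shares a chain with one of $a,b$ and same-chain teeth have disjoint $x$-projections) is the clean reason the greedy recursion goes through, and the paper does not articulate it. On the sufficiency side the two arguments are closer in spirit---your pocket decomposition below and above $\sigma$ plays the same role as the paper's r-star pieces---and the boundary check you flag at the sources is indeed routine. In short: same statement, but your proof actually closes the optimality gap that the paper's sketch leaves open.
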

\begin{proof}
Suppose that $ m $ guards is sufficient to cover the entire polygon $ P $, using lemma~\ref{le:le02} prove that this number of guards necessary even for guarding the dent edges of $ P $. Every line segment $ si_i\in SI $ is a subset of a r-star sub-polygon i.e. if we decompose $ P $ into r-star parts(sub-polygons) then the kernels of every r-star sub-polygons has at least one point in the elements of $ SI $, so the entire $ P $ is covered by these $ m $ guards and their positions.
\end{proof}

\subsection{Time Complexity of Algorithm}
In this subsection, we analyze the algorithm and describe how it can be implemented in linear time. To compute the optimal solution for guarding $ P $, we need to solve subproblems. To solve the subproblems of $ P $, we need to decompose $ P $ vertically into the set of rectangles $ R $ described in section~\ref{ss:ss1}. Therefore, we obtain the sets $ E $, $ U $ and $ L $. We show that this decomposition is possible in $ O(n) $-time. After that, we use algorithm~\ref{al:algo2} that is processing-able in $ O(n) $-time, we explain it before. The total of vertices of the all obtained balanced orthogonal is $ O(n) $, so, in algorithm~\ref{al:algo3}, finding the minimum number of guards for all balanced sub-polygons is possible in $ O(n) $-time. Therefore, all computations handle in $ O(n) $-time. Finally, $ m $ is returned as the optimal solution for the problem on $ P $. Therefore, we have proved the main result of this section:
\begin{theorem}
	\label{th:th01}
	There exists a purely geometric algorithm that can find the minimum number of guards for an orthogonal and $ x $-monotone polygon with $ n $ vertices, with orthogonal visibility in $ O(n) $ time.
\end{theorem}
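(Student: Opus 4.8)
The plan is to assemble the theorem from the three ingredients already established: the linear-time balanced decomposition produced by Algorithm~\ref{al:algo2}, the per-sub-polygon optimality of Algorithm~\ref{al:algo3} (Lemma~\ref{le:lemma3}), and the separation of align segments. First I would run the vertical decomposition of Section~\ref{ss:ss1} to obtain $R$, $U$, $L$, $E_U$, $E_L$, and then Algorithm~\ref{al:algo2} to obtain the balanced sub-polygons $\rho_1,\dots,\rho_k$ together with their align segments $\sigma_1,\dots,\sigma_k$. Since the loop of Algorithm~\ref{al:algo2} visits each rectangle of $R$ a constant number of times and $|R|=\frac{n-2}{2}$, this step is $O(n)$, and the total number of vertices over all $\rho_i$ is $O(n)$ because only the cut rectangles are shared between consecutive sub-polygons.

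Next I would establish that the global optimum separates into the sum of the sub-problem optima. The key is the Claim together with the visibility facts already proved: for $|i-j|>1$ no point of $\sigma_i$ sees any point of $\sigma_j$, and for $j=i+1$ only the common endpoint is shared, so for $p\in int(\sigma_i)$ no point of $int(\sigma_{i+1})$ is visible from $p$. Granting the Claim that some optimal guard set lies entirely on the align segments, a guard placed on $int(\sigma_i)$ can cover points of $\rho_i$ only, so the subproblems are independent and the optimum is $\sum_{i=1}^{k} m_i$, where $m_i$ is the output of Algorithm~\ref{al:algo3} on $\rho_i$. Applying Lemma~\ref{le:lemma3} to each $\rho_i$ gives that $m_i$ is exactly the minimum for $\rho_i$, and summing yields the minimum for $P$. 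The positions are read directly off the sets $SI$, which are explicit sub-segments of the $\sigma_i$, so the algorithm is purely geometric in the required sense.

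For the running time I would note that each invocation of Algorithm~\ref{al:algo3} is linear in the number of horizontal edges of its sub-polygon, and since these edge counts sum to $O(n)$, the combined guard-placement phase is $O(n)$; together with the $O(n)$ vertical decomposition and the $O(n)$ balanced decomposition, the total is $O(n)$, and $m=\sum_i m_i$ is returned.

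The hard part, and the step I expect to carry the real weight of the argument, is justifying the Claim --- that restricting guards to the align segments loses no optimality. A priori a guard placed off every $\sigma_i$, near a cut rectangle, might straddle two adjacent sub-polygons and cover tooth edges on both sides, beating $\sum_i m_i$. To rule this out I would push Lemma~\ref{le:le02} globally: every tooth edge can be guarded only from within its orthogonal shadow, and I would show that the merged orthogonal shadows collected across all sub-polygons remain pairwise disjoint as vertical strips --- in particular, that the cut chosen by Algorithm~\ref{al:algo2} at a non-source local-minimum rectangle prevents a single shadow from spanning two sub-polygons. This yields a lower bound of $\sum_i m_i$ guards matching the upper bound realized on the align segments, which simultaneously proves optimality and validates placing every guard on some $\sigma_i$.
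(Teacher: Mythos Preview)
Your proposal is correct and follows essentially the same approach as the paper: vertical decomposition, then Algorithm~\ref{al:algo2} to obtain the balanced sub-polygons, then Algorithm~\ref{al:algo3} on each piece with Lemma~\ref{le:lemma3} supplying per-piece optimality, and a summation of sizes to get the global $O(n)$ bound. Your final paragraph, where you explicitly worry about the Claim and propose to recover it from a global application of Lemma~\ref{le:le02} (tooth shadows of $P$ itself, merged pairwise, give a matching lower bound), actually goes further than the paper, which simply asserts that proving the algorithm optimal establishes the Claim; your observation that the balanced cuts are vertical and hence do not split or create horizontal tooth edges is the right way to make that lower bound global.
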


\section{Hidden Guarding of Histogram Galleries}
Let $ P $ be a histogram polygon with $ n $ vertices, we want to find minimum number of guards such that every point in $ P $ is visible from some guards under the constraint that no two guards may see each other. Every histogram polygon is a balanced monotone one and has an edge that connecting the leftmost and rightmost vertical edges of it as called \textit{base}. So, the base edge is an align segment for histogram polygon $ P $. In every monotone polygon, the number of tooth edges is two more than the number of dent edges i.e the number of tooth edges belong to $ E_L $ (or $ E_U $) is one more than the number of dent edges belong to it. In this section, we present a linear-time exact algorithm for hidden guarding histogram polygons. Our algorithm uses a geometric approach instead of graph theoretical approach to obtain the result. Therefore, We find the exact geometric locations of the point guards. 

\subsection{The Decomposition of an histogram Polygon into Pyramid Polygons}
Given a histogram polygon $ P $ with $ n $ vertices. Extend every dent edge of $ P $, exclusively from its right endpoint until intersect the boundary. Using this strategy, decompose $ P $ into several sub-polygons. All the obtained sub-polygons are orthoconvex and absolutely pyramid polygons. If the number of dent edges of $ P $ be $ m $, then the number of obtained pyramid sub-polygons is $ m+1 $, exactly. The base edges of all pyramid polygons lie on the extended dent edges except the rightmost pyramid that its base edge is in common with the base of $ P $. For an illustration see figure~\ref{fi:fig5}(a).
\begin{figure}
	\centering
	\includegraphics[width=\textwidth]{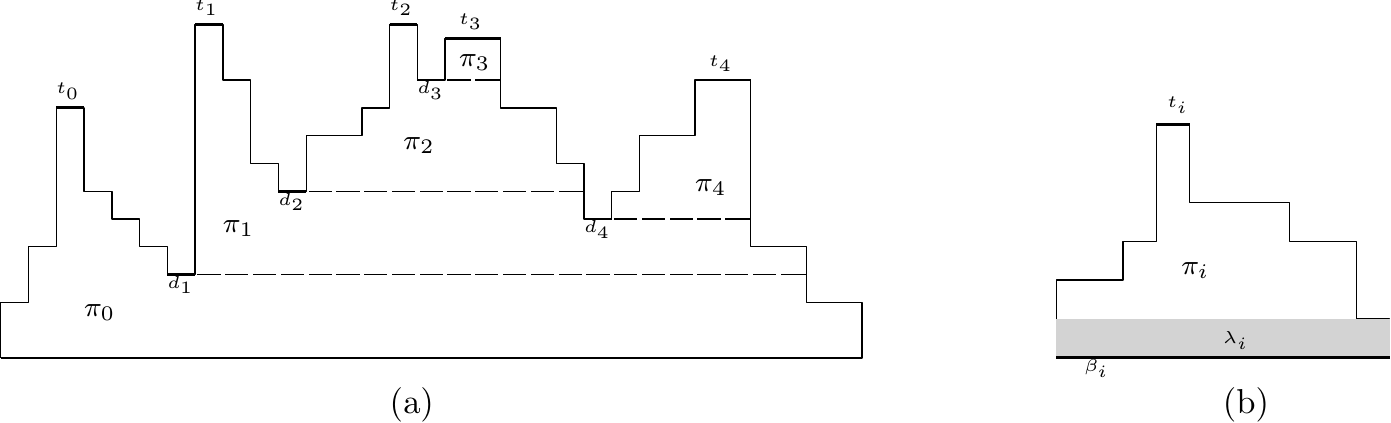}
	\caption{(a)The decomposition of histogram polygon into pyramid polygons. (b) The basis rectangle of a pyramid polygon is shown as gray.}
	\label{fi:fig5}
\end{figure}
\begin{defini}
In the pyramid polygon $ P_0 $, the maximum area rectangle $ R_0\subset P_0 $ that one of its edge is the base edge of pyramid, is named \textit{basis rectangle}. See figure~\ref{fi:fig5}(b).
\end{defini}
Suppose that $ \Pi=\{\pi_0, \pi_1,\dots, \pi_m\} $ is the set of all obtained pyramid sub-polygons ordered from left to right, according to the $ x $-coordinate of their leftmost vertical edges. Also, suppose that $ \beta_i $ and $ \lambda_i $ are the base edge and basis rectangle of sub-polygon $ \pi_i $, {\footnotesize for $ 0\leq i \leq m $}, respectively. In the pyramid $ \pi_i $, the base edge $ \beta_i $ is a tooth edge, except $ \beta_i $, there exists another tooth edge $ t_i $ on the upper chain of $ \pi_i $. For every pyramid sub-polygon, we compute the orthogonal shadow of $ t_i $ and intersection between $ \lambda_i $ and $ os_{t_i} $ as the position for placing a guard on it i.e. we hidden guarding an histogram polygon with locating one guard in one of the interior points of every $ si_i=\lambda_i \cap os_{t_i} $ as named \textit{shadow intersection} of $ \pi_i $.
\begin{lemma}
The interior points of two different shadow intersection are invisible from each other. In the other words, every two points $ p_1 \in int(si_i) $ and $ p_2\in int(si_j) $ that $ i\neq j $, is not visible from each other.
\end{lemma}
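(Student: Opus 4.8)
The plan is to prove the statement in its strong form by producing, for arbitrary $p_1\in int(si_i)$ and $p_2\in int(si_j)$ with $i\neq j$, a single witness point that lies inside the axis-parallel bounding rectangle of $p_1$ and $p_2$ yet outside $P$; by the definition of orthogonal visibility this immediately certifies that $p_1$ and $p_2$ cannot see each other. Up to reflection I may assume $P$ has its base at the bottom and that the dents are extended rightward, so the pyramids are ordered left-to-right with the base pyramid leftmost; the other orientations are symmetric. Since $si_k\subseteq os_{t_k}$ and $os_{t_k}$ is contained in the vertical strip spanned by the tooth $t_k$, and since distinct upper teeth occupy pairwise disjoint $x$-intervals (consecutive teeth are separated by a dent), the shadow intersections $si_0,\dots,si_m$ are pairwise disjoint in their $x$-projections and inherit the left-to-right order of the teeth. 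Hence I may assume $x(p_1)<x(p_2)$, so $i<j$, and $p_2$ lies in the more-right pyramid $\pi_j$ with $j\geq 1$.

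Next I would pin down exactly which dent to use. Let $\delta$ be the dent immediately preceding the tooth $t_j$ in the left-to-right order; by construction the base $\beta_j$ of $\pi_j$ rests on the rightward extension of $\delta$, so $y(\beta_j)=y(\delta)$. Two facts about $\delta$ drive the argument. First, its \emph{height}: because $\pi_j$ sits on the line $y=y(\beta_j)$ and lies entirely above it, every interior point of $si_j\subseteq\pi_j$ satisfies $y(p_2)>y(\beta_j)=y(\delta)$. Second, its \emph{position}: $\pi_j$ occupies the extension of $\delta$, which lies to the right of the original dent edge, so $si_j$ (hence $p_2$) lies to the right of that edge; and every tooth $t_i$ with $i<j$ precedes $t_j$, hence precedes $\delta$, so $si_i$ (hence $p_1$) lies to the left of it. Consequently the original edge of $\delta$ has its $x$-interval strictly inside $[x(p_1),x(p_2)]$. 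Note this works for non-adjacent $i,j$ as well, since all earlier teeth lie left of $\delta$.

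With these two facts the conclusion is immediate. Pick any $x_\delta$ in the interior of the original edge of $\delta$, and set $q=(x_\delta,\,y(p_2))$. Then $x(p_1)<x_\delta<x(p_2)$ and the second coordinate equals $y(p_2)$, so $q$ lies in the bounding rectangle of $p_1$ and $p_2$. On the other hand, at abscissa $x_\delta$ the ceiling of $P$ is exactly the dent floor $y(\delta)$ and everything strictly above it is exterior; since $y(p_2)>y(\delta)$, the point $q$ is outside $P$. Thus the bounding rectangle is not contained in $P$, so $p_1$ and $p_2$ are invisible, as claimed. The degenerate case $y(p_1)=y(p_2)$ needs no separate treatment, since $q$ still lies on the horizontal segment to which the rectangle then collapses.

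The step I expect to be the real obstacle is the second paragraph: establishing rigorously, from the decomposition rule, that the base of every pyramid other than the one on the base of $P$ lies on the rightward extension of a dent, and that this simultaneously places a separating dent whose original edge sits to the right of all earlier shadow intersections and to the left of $si_j$, with all of $\pi_j$ lying above the dent floor. Everything after that is a one-line containment check. In particular, the argument uses that the left-to-right ordering of the pyramids coincides with the ordering of their defining teeth; this monotone correspondence is implicit in the construction and must be made explicit, for if it failed the position claim could break down.
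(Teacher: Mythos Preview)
Your proposal is correct and follows essentially the same approach as the paper: both arguments single out the dent $d_j=\delta$ immediately preceding the tooth $t_j$, use that $p_2\in int(si_j)$ lies strictly above $y(\delta)$ while $\delta$'s $x$-range sits between $x(p_1)$ and $x(p_2)$, and conclude that the bounding rectangle of $p_1,p_2$ cannot avoid the exterior above $\delta$. The paper phrases this as a contradiction ($R\cap d_j=\emptyset$ forces $d_j$ above $p_2$), whereas you give the direct witness $q=(x_\delta,y(p_2))$ and are more explicit about the $x$-separation and the pyramid--tooth ordering; these are stylistic rather than substantive differences.
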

\begin{proof}
After the described decomposition, assume that $ \Pi=\{\pi_0, \pi_1,\dots, \pi_m\} $ be the set of obtained sub-polygons ordered from left to right, according to the $ x $-coordinate of their leftmost vertical edge. Each of them has a base edge and assume that $ B=\{\beta_0, \beta_1,\dots, \beta_m\} $ be the set of base edges of $ \Pi $ ordered corresponding to order of polygons they belong to. Also, let $ \Lambda=\{\lambda_0, \lambda_1,\dots, \lambda_m\} $  and $ SI=\{si_0, si_2,\dots, si_m\}$ be the sets of basis rectangles and shadow intersection areas of $ \Pi $ in the same expressed order. And assume that $ D=\{d_1, d_2,\dots,d_m\} $ be the set of dent edges of $ P $ ordered from left to right according to $ x $-coordinates of their left endpoints. Remember that $ \beta_x $ is obtained after extending the dent edge $ d_x $, therefore all the interior points of $ si_x $, $ \lambda_x $ and even $ \pi_x $ is higher than $ d_x $, {\footnotesize for every $ 0\leq x\leq m $}. Proof by contradiction. Suppose that there exist two points $ p_1 \in int(si_i) $ and $ p_2\in int(si_j) $ that $ i<j $, is visible from each other. So, there is an axis-aligned rectangle $ R $ spanned by $ p_1 $ and $ p_2 $ contained in $ int(P) $, then $ R\cap(P-int(P)) = \emptyset $, so, $ R\cap d_j=\emptyset $, too. Since that $ d_j $ belongs to upper chain, $ d_j $ is higher than all points in $ R $ and it is higher than $ p_2 $. So, there \textbf{is} a point belongs $ si_j $ that is higher than $ d_j $ and this is a contradiction.
\end{proof}
\subsection{Analyze of Algorithm}
In this subsection, we present pseudo-code of the algorithm that is described in the previous subsection as algorithm~\ref{al:algo4}. We analyze the algorithm, prove that it find the optimum number of hidden guards and explain how it can be implemented in linear time.
\begin{algorithm}[]
	\KwData{the horizontal and vertical edges of upper chain of histogram polygon with $ n $ vertices and base edge $ b $ (the set of horizontal edges $ E_H $, the set vertical edges $ E_V $)}
	\KwResult{the minimum number of hidden guards ($ m $) and their positions ($ SI $)}
	Set $ m=0 $ and $ SI=\emptyset $\;
	Set $ \varepsilon = $ minimum length of $ E_V $\;
	Set $ y_1=y(b) $ and $ y_2=y(b)+\varepsilon $\;
	\ForEach{horizontal edge $ e_{i} $ belongs to $ E_H $}{
		\If{ Angles of $ left(e_i)$ and $right(e_i)$ are equal to $\frac{3\pi}{2}$}{
			$ y_1=y(e_i) $\;
			$ y_2=y(e_i)+\varepsilon $\;
		}	
		\If{ Angles of $ left(e_i)$ and $right(e_i)$ are equal to $\frac{\pi}{2}$}{
			$ A_i=(x(left(e_i)),y_1) $\;
			$ B_i=(x(right(e_i)),y_2) $\;
			Set $ si_i= $ the rectangle spanned by two points $ A_i $ and $ B_i $\;
			Set $SI=SI \cup \{si_i $\}\;
			$ m++ $;
		}
	}
\caption{Optimum hidden guarding of a histogram polygon $ P $.}
\label{al:algo4}
\end{algorithm}
The interior points of the members of $ SI $ are invisible from each other, so, if we put a guard over each rectangle, then all the pyramid sub-polygons are guarded and also the entire histogram polygon is covered.
\begin{lemma}
The minimum number of hidden guards is equal to $ m $ that is obtained by algorithm~\ref{al:algo4} for guarding a histogram polygon $ P $.
\end{lemma}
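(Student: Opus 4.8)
The plan is to prove the two inequalities separately: that the set $SI$ returned by Algorithm~\ref{al:algo4} is a feasible hidden guard set of size $m$ (an upper bound on the optimum), and that \emph{every} guard set covering $P$ — hidden or not — must contain at least $m$ guards (a lower bound). Since a hidden guard set is in particular a guard set, the lower bound shows that the minimum hidden guarding number is at least $m$; combined with the feasibility of the construction, this forces it to be exactly $m$. The convenient consequence of this split is that for a histogram the hidden constraint costs nothing: the construction already meets the \emph{unrestricted} lower bound, so I never have to argue about hidden sets in the hard direction.

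For feasibility I would first record that $m$ equals the number of tooth edges on the upper chain, which is the number of pyramids $\pi_0,\dots,\pi_m$ produced by the decomposition, one peak tooth $t_i$ per pyramid. For each $i$ I would check that $si_i=\lambda_i\cap os_{t_i}$ is nonempty and lies in the kernel of $\pi_i$: the base edge $\beta_i$ supplies the horizontal segment $\sigma_1$ of Lemma~\ref{le:lemma001}, the vertical segment dropped from the peak $t_i$ to the base supplies $\sigma_2$, and the thin rectangle that the algorithm builds from $y_1$, $y_2$ and the $x$-extent of $t_i$ is exactly the intersection region guaranteed to lie in the kernel (the use of $\varepsilon$ equal to the minimum vertical-edge length keeps this rectangle from poking past the top of the tooth or across the bounding dents, so it is well defined). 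By Lemma~\ref{le:lemma001} a guard in $si_i$ then sees all of $\pi_i$, so the $m$ guards cover every pyramid and hence all of $P$; that they form a hidden set is precisely the content of the preceding lemma, which states that interior points of distinct shadow intersections are mutually invisible. Thus $SI$ is a hidden guard set of size $m$.

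The lower bound is where the real work lies. Let $t_1,\dots,t_m$ denote the tooth edges of the upper chain, i.e. the edges counted by Algorithm~\ref{al:algo4}. By Lemma~\ref{le:le02} any guard that sees $t_i$ must lie in its orthogonal shadow $os_{t_i}$. The geometric fact I would establish is that these shadows are pairwise disjoint in the $x$-coordinate: because the upper chain is $x$-monotone its horizontal edges have disjoint $x$-projections, and distinct teeth are local maxima separated by dents, so their $x$-ranges are genuinely disjoint; moreover, in a histogram the vertical strip below any upper edge runs unobstructed down to the base, so $os_{t_i}$ has $x$-range exactly equal to that of $t_i$. Consequently a guard lying in $os_{t_i}$ has its $x$-coordinate in the range of $t_i$ and therefore cannot lie in any $os_{t_j}$ with $j\neq i$; by Lemma~\ref{le:le02} again, such a guard can see at most one tooth. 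Hence covering the $m$ teeth forces $m$ distinct guards, and every guard set covering $P$ has at least $m$ members.

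Putting the pieces together, the minimum hidden guarding number is at least the minimum unrestricted guarding number, which is at least $m$ by the lower bound, while Algorithm~\ref{al:algo4} exhibits a hidden guard set of size exactly $m$; therefore the optimum equals $m$. I expect the main obstacle to be the feasibility/coverage step — verifying rigorously that each $si_i$ is nonempty and sits inside the kernel of its pyramid, that is, that the rectangle assembled from $y_1$, $y_2$ and the $x$-extent of $t_i$ really realizes $\lambda_i\cap os_{t_i}$ in the region where Lemma~\ref{le:lemma001} applies. The lower bound, by contrast, reduces cleanly to the shadow-disjointness observation once Lemma~\ref{le:le02} is invoked.
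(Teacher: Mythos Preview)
Your proposal is correct and follows the same two-inequality skeleton as the paper: the upper bound comes from showing each $si_i$ sits in the kernel of its pyramid $\pi_i$ (so the $m$ guards cover $P$) together with the preceding lemma for hiddenness, and the lower bound comes from the unrestricted guarding problem. The paper compresses the lower bound into a single citation --- it observes that Algorithm~\ref{al:algo4} outputs the same count as Algorithm~\ref{al:algo3} and then invokes Lemma~\ref{le:lemma3} --- whereas you unpack that citation into the underlying argument, namely Lemma~\ref{le:le02} plus the pairwise $x$-disjointness of the tooth shadows. For a histogram this disjointness is immediate (the base spans the full width, so each shadow is the full-height strip under its tooth), so your direct route is arguably cleaner than threading through Lemma~\ref{le:lemma3}, but it is the same idea that lemma's proof rests on.

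One small tightening: Lemma~\ref{le:lemma001} as stated only produces a single kernel point $\phi$, not the whole rectangle $si_i$. The paper simply asserts ``$si_i$ is a subset of the kernel of a pyramid sub-polygon'' without proof; if you want your feasibility step to be fully rigorous you should add the one-paragraph check that \emph{every} point of $\lambda_i\cap os_{t_i}$ (not just the intersection point of $\sigma_1$ and $\sigma_2$) $r$-sees all of $\pi_i$. This follows from the orthoconvexity of $\pi_i$: for $p\in si_i$ and any $q\in\pi_i$, the monotone ascent of the upper chain toward the peak keeps the bounding rectangle of $p$ and $q$ under the boundary.
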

\begin{proof}
The number of guards that is obtained using algorithm~\ref{al:algo4} for finding the minimum hidden guard set is equal to the number of guards that is obtained using algorithm~\ref{al:algo3} for the minimum regular guard set, therefore we proved that it is optimum in lemma~\ref{le:lemma3}. Every rectangle $ si_i\in SI $ is a subset of kernel of a pyramid sub-polygon that is r-star i.e. if we decompose $ P $ into r-star parts(sub-polygons) then the kernels of every r-star sub-polygons has at least one point in the elements of $ SI $, so the entire $ P $ is covered by these $ m $ guards and their positions.
\end{proof}
The time complexity of the algorithm is linear time corresponding to the size of the polygon, clearly. To solve the problem, we need to decompose $ P $ vertically into the set of rectangles $ R $ described in section~\ref{ss:ss1}. Therefore, we obtain the sets $ E_H $ and $ E_V $. This decomposition is possible in $ O(n) $-time. After that, we use algorithm~\ref{al:algo3} that is processing-able in $ O(n) $-time. So, all computations handle in $ O(n) $-time. Finally, $ m $ is returned as the optimal solution for the problem on $ P $. We have proved the main result of this section:
\begin{theorem}
There exists a purely geometric algorithm that can find the minimum number of hidden guards for a histogram polygon with $ n $ vertices, with orthogonal visibility in $ O(n) $ time.
\end{theorem}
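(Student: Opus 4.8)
The plan is to assemble the theorem from the four facts already established for Algorithm~\ref{al:algo4}, treating coverage, the hidden property, optimality, and the running time in turn. First I would verify that the output $SI=\{si_0,\dots,si_m\}$, with one guard placed in each $int(si_i)$, is a genuine guard set. Recall that extending the dent edges decomposes $P$ into the $m+1$ pyramid sub-polygons $\pi_0,\dots,\pi_m$, whose union is exactly $P$. Each $\pi_i$ is an orthoconvex pyramid, so by the kernel argument of Lemma~\ref{le:lemma001} it is r-star; the shadow intersection $si_i=\lambda_i\cap os_{t_i}$ is a nonempty subset of its kernel, since it lies in the basis rectangle (so it sees the base tooth $\beta_i$) and in the orthogonal shadow of the upper tooth $t_i$ (so it sees $t_i$ by Lemma~\ref{le:le02}). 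A single guard in $int(si_i)$ therefore sees all of $\pi_i$, and because the pyramids tile $P$, the $m+1$ guards cover the whole polygon.

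Next I would establish the hidden property, which is immediate from the invisibility lemma for shadow intersections proved in the previous subsection: any two guards lie in the interiors of distinct $si_i$ and $si_j$ with $i\neq j$, and these interiors are mutually invisible. Hence no two guards see each other, so the constructed set is a legitimate hidden guard set.

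The heart of the argument is optimality, and here I would use a sandwich. On one side, every hidden guard set is in particular a guard set, so the minimum size of a hidden guard set is at least the minimum size of an unconstrained guard set for $P$. Viewing $P$ as a balanced monotone polygon with its base as align segment, Lemma~\ref{le:lemma3} identifies that unconstrained minimum as the count produced by Algorithm~\ref{al:algo3}, and the preceding lemma shows this equals the count $m$ produced by Algorithm~\ref{al:algo4}. On the other side, the construction above exhibits a hidden guard set of size exactly $m$. The two bounds coincide, so $m$ is the minimum number of hidden guards. The step I expect to be the main obstacle is precisely this matching: one must confirm that the lower bound coming from the tooth edges (each upper tooth $t_i$ forces a guard inside $os_{t_i}$, and these shadows are separated across pyramids) is met \emph{without} the hidden constraint costing any extra guard. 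The invisibility lemma is what makes the hidden constraint ``free,'' but this hinges on checking that the $m$ shadow intersections are simultaneously nonempty and pairwise non-visible, so that placing one guard in each is feasible.

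Finally, for the running time I would confirm that every phase is linear: the vertical decomposition into $R$ and the sets $E_H,E_V$ takes $O(n)$ time (Section~\ref{ss:ss1}), the dent-extension decomposition into pyramids is a single left-to-right sweep, and Algorithm~\ref{al:algo4} is one pass over the upper chain touching each edge a constant number of times. Summing these gives the claimed $O(n)$ bound, and returning $SI$ together with $m$ completes the algorithm.
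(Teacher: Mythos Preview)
Your proposal is correct and follows essentially the same approach as the paper: the same dent-edge decomposition into pyramids, the same invisibility lemma for the hidden property, the same sandwich argument for optimality (matching the hidden-guard count against the unconstrained optimum from Lemma~\ref{le:lemma3} via the preceding lemma), and the same linear-time accounting. If anything, your write-up is more explicit in separating coverage, hiddenness, optimality, and running time than the paper's own terse summary.
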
 

\section{Conclusion}
We studied the problem of finding the minimum number of hidden guards which is under orthogonal visibility. This new version is named \textit{hidden art gallery problem}. The total target in the hidden art gallery problem is finding the optimum hidden guard set $ G $ which is a set of point guards in polygon $ P $ that all points of the $ P $ are visible from at least one guard in $ G $ under the constraint that no two guards may see each other. We present an exact algorithm for finding the hidden guard set for histogram galleries. We solved this problem in the linear time according to $ n $ where $ n $ is the number of sides of histogram polygon. the space complexity of our algorithm is $ O(n) $, too. Many of the algorithms presented in this field are based on graph theory, but our proposed algorithm is based on geometric approach. This approach can lead to improved performance and efficiency in algorithms. For this reason, we also provided a purely geometric algorithm for the orthogonal art gallery problem (not hidden) where the galleries are monotone. We are aware that this problem has already been solved in linear time and our algorithm is linear-time, too. This new approach helped us solve the hidden art gallery problem more easily. Actually, the time complexity of the hidden orthogonal art gallery problem even for monotone polygon is still open. For the future works, we want to try to solve this problem for every simple orthogonal polygon with/without barriers. Both time and space complexity of our presented algorithm is order of $ O(n) $ and it is the best for this new version of the problem.

\bibliographystyle{plain}
\bibliography{bibfile}

\end{document}